\documentclass[sigconf]{acmart}

\usepackage{tikz}
\usepackage{macros}
\AtBeginDocument{%
  }

\setcopyright{acmlicensed}
\copyrightyear{2026}
\acmYear{2026}
\acmDOI{XXXXXXX.XXXXXXX}

\usepackage{todonotes}

\usepackage{amsthm}
\newtheorem{theorem}{Theorem}[section]
\newtheorem{lemma}{Lemma}[section]

\theoremstyle{definition}
\newtheorem{definition}{Definition}[section]
\newtheorem{example}{Example}[section]

\newtheorem{remark}{Remark}[section]

\usetikzlibrary{automata, arrows.meta, positioning}

\begin{document}
\title{Runtime Semantic Discovery for Evolving Action Alphabets}

\author{Pietro Bernabei}
\authornote{Both authors contributed equally to this research.}
\email{pietro.bernabei@imtlucca.it}
\email{pietro.bernabei@unifi.it}
\orcid{0009-0004-0582-4591}
\affiliation{%
  \institution{IMT School for Advanced Studies Lucca}
  \city{Lucca}
  \country{Italy}
}

\author{Rosario Pugliese}
\authornotemark[1]
\email{rosario.pugliese@unifi.it}
\orcid{0000-0002-1419-1405}
\affiliation{%
  \institution{University of Florence}
  \city{Florence}
  \country{Italy}
}


\begin{abstract}
Runtime Verification typically assumes a fixed alphabet of observable actions. This assumption becomes fragile in evolving software systems, where updates, reconfigurations, or architectural changes may rename, split, or fuse actions while preserving the system’s high-level intent. In these cases, previously valid monitors and specifications may become misaligned with the running system.

In this paper, we study runtime semantic discovery, namely the problem of identifying whether newly observed actions play the same operational role as previously known ones, without reconstructing a global behavioural model. To this end, we propose a local and runtime-computable notion of action similarity based on the neighbour contexts in which actions occur within execution traces. We define a family of k-neighbour equivalences and use them to compare systems whose observable alphabets differ.

Our approach is intentionally lightweight: it does not aim to fully characterise behavioural preservation, but rather to identify candidate semantic correspondences between actions in evolving systems. We discuss the expressive power and limitations of neighbour-based equivalence and show how inferred correspondences can support the adaptation of runtime monitors and specifications under selected classes of alphabet evolution.
\end{abstract}

\begin{CCSXML}
	<ccs2012>
	<concept_id>10011007.10011074.10011099</concept_id>
	<concept_desc>Software and its engineering~Software verification and validation</concept_desc>
	<concept_significance>500</concept_significance>
	</concept>
	<concept>
	<concept_id>10011007.10011074.10011099.10011692</concept_id>
	<concept_desc>Software and its engineering~Formal software verification</concept_desc>
	<concept_significance>300</concept_significance>
	</concept>
	<concept>
	<concept>
	<concept_id>10003752.10003753.10003761</concept_id>
	<concept_desc>Theory of computation~Concurrency</concept_desc>
	<concept_significance>300</concept_significance>
	</concept>
	<concept>
	<concept_id>10003752.10003790.10003793</concept_id>
	<concept_desc>Theory of computation~Modal and temporal logics</concept_desc>
	<concept_significance>300</concept_significance>
	</concept>
	</ccs2012>
\end{CCSXML}

\ccsdesc[500]{Software and its engineering~Software verification and validation}
\ccsdesc[300]{Software and its engineering~Formal software verification}
\ccsdesc[300]{Theory of computation~Concurrency}
\ccsdesc[300]{Theory of computation~Modal and temporal logics}

\renewcommand{\shortauthors}{Bernabei and Pugliese}

\keywords{Runtime Verification, Alphabet Evolution, Neighbour Equivalence, Semantic Discovery, Monitor Adaptation}


\maketitle
\section{Introduction}
\label{sec:intro}

Modern software systems evolve continuously \cite{islam_runtime_2023}. Updates, reconfigurations, service replacement, and architectural refactorings may change the set of observable actions produced at runtime, even when the system preserves its overall functional intent. For Runtime Verification (RV), this creates a significant challenge: monitors and specifications are usually defined over a fixed alphabet, which may therefore become obsolete as the monitored system evolves, even when the system’s functional behaviour remains conceptually unchanged.

This problem is particularly relevant in settings where behavioural adaptation happens frequently and with limited human supervision, as in the Continuous Integration/Continuous Development process \cite{cerny_analyzing_2025}. In such scenarios, manually revising specifications or rebuilding behavioural models from scratch is often too expensive or too slow. What is needed instead is a lightweight runtime mechanism able to relate newly observed actions to previously known ones, so that existing monitoring artefacts can remain useful across system evolution.

In this paper, we investigate whether this relation can be inferred directly from the local structure of execution traces. Our starting intuition is simple: actions that play the same operational role often occur in similar local contexts, even if their labels or implementations change. This suggests using trace neighbours as a basis for identifying semantic continuity across evolving alphabets.

To formalise this idea, we introduce a notion of neighbour context for action occurrences and define a hierarchy of k-neighbour equivalences. These equivalences provide a local and runtime-computable criterion for comparing actions and systems with partially different observable alphabets. The goal is not to replace stronger behavioural relations such as trace equivalence or bisimulation, but rather to provide a practical structural invariant for detecting reasonable correspondences between old and new actions.

System evolution is fundamentally driven by corrective, adaptive, perfective, or extensive maintenance needs, a taxonomy frequently adopted in software engineering (e.g., \cite{ketfi_dynamic_2002}). At the level of observable behaviour, these high-level adaptations often manifest as structural changes to the system's alphabet. Crucially, in addressing these scenarios, the aim is not to recover full semantic equivalence, but rather to derive local evidence for candidate action correspondences. This lightweight perspective is particularly suited to capturing these concrete manifestations, focusing on common evolutionary patterns such as the renaming, splitting, and fusing of actions. In such cases, the monitor is not simply facing a different trace, but a changed observational vocabulary. Our proposal is to use neighbour preservation as evidence that some actions may still represent the same functional role from the viewpoint of monitoring.

The contribution of the paper is threefold:

\begin{enumerate}
    \item we define a local notion of action similarity based on neighbour contexts in execution traces, together with its generalisation to k-neighbour equivalence;
    \item we introduce an abstraction-based framework for comparing systems whose observable alphabets differ, capturing typical alphabet evolutions such as renaming, splitting, fusing, and more general many-to-many mappings;
    \item we clarify both the usefulness and the limitations of neighbour-based equivalence, arguing that it should be interpreted as a mechanism for deriving candidate semantic mappings that can support monitor and specification adaptation, rather than as a complete characterisation of behavioural preservation.
\end{enumerate}

Overall, the paper aims to isolate a tractable fragment of the broader problem of specification evolution in Runtime Verification, focusing on what can be inferred incrementally and online from the observed execution context.

The rest of the paper is organised as follows.

Sec.~\ref{sec:motivating_example} introduces a simple motivating example.
Sec.~\ref{sec:relwork} reviews the most closely related work.
Sec.~\ref{sec:methodology} presents the methodological perspective of the paper.
Sec.~\ref{sec:neighbour-equiv} develops the formal framework based on neighbour and $k$-neighbour equivalence, together with an abstraction mechanism for comparing systems with partially different alphabets.
Sec.~\ref{sec:spec-updat} discusses monitor and specification adaptation under alphabet evolution.
Sec.~\ref{sec:conclusion} concludes and outlines future research directions.

\section{Motivating Example}
\label{sec:motivating_example}

Consider a Secure Information Retrieval (SIR) system similar to the one in Figure~\ref{fig:SIR}. At a high level, the system includes a request processor, a certification component, a data sender, a logger, and an encryption component. Runtime verification is used to monitor whether messages are properly processed before being transmitted. In particular, one may want to check that every message is encrypted before it is sent over the network.

Assume that the original system emits an observable action named 'Encryptor' whenever the encryption step is performed.  We could therefore be interested in monitoring a possible requirement of 'send only encrypted messages'. For instance, this requirement can be specified in LTL as $\varphi=\Box(\neg send\text \ U \  \mathit{Encryptor})$. 
Now, suppose that, due to a runtime update, the encryption component is replaced by a different implementation. In the simplest case, the system may expose a new action, such as 'CryptoEngine', instead of 'Encryptor'. While the workflow's functional intent remains unchanged, the monitor now observes an action alphabet different from the one used when the specification was defined. Consequently, the original monitor may fail to recognise CryptoEngine as the action fulfilling the role previously associated with Encryptor. So to maintain the continuity of the monitoring, $\varphi$ must be updated to $\varphi'=\Box(\neg send \ U \ CryptoEngine)$.

This example illustrates the central difficulty addressed in this paper: system evolution may preserve the operational role of a step while changing the observable actions through which that step is represented. In simple cases, this change resembles a renaming, but in general, it may involve finer-grained refinements, as many-to-one and one-to-many correspondences between old and new actions. In all such cases, the key question is whether a runtime monitor can identify which newly observed actions play the same role as previously known ones, without reconstructing a full behavioural model of the updated system.

Our working intuition is that the role of an action can often be inferred from the local context in which it appears. If a newly observed action systematically occurs between the same surrounding phases of the workflow as a previously known action, then it is a plausible candidate to represent the same operational role from the monitor’s perspective. This observation motivates the neighbour-based approach developed in the following sections, where we compare actions based on the local contexts induced by execution traces rather than on global behavioural equivalence.

\begin{figure*}
\centering
\includegraphics[width=1\linewidth]{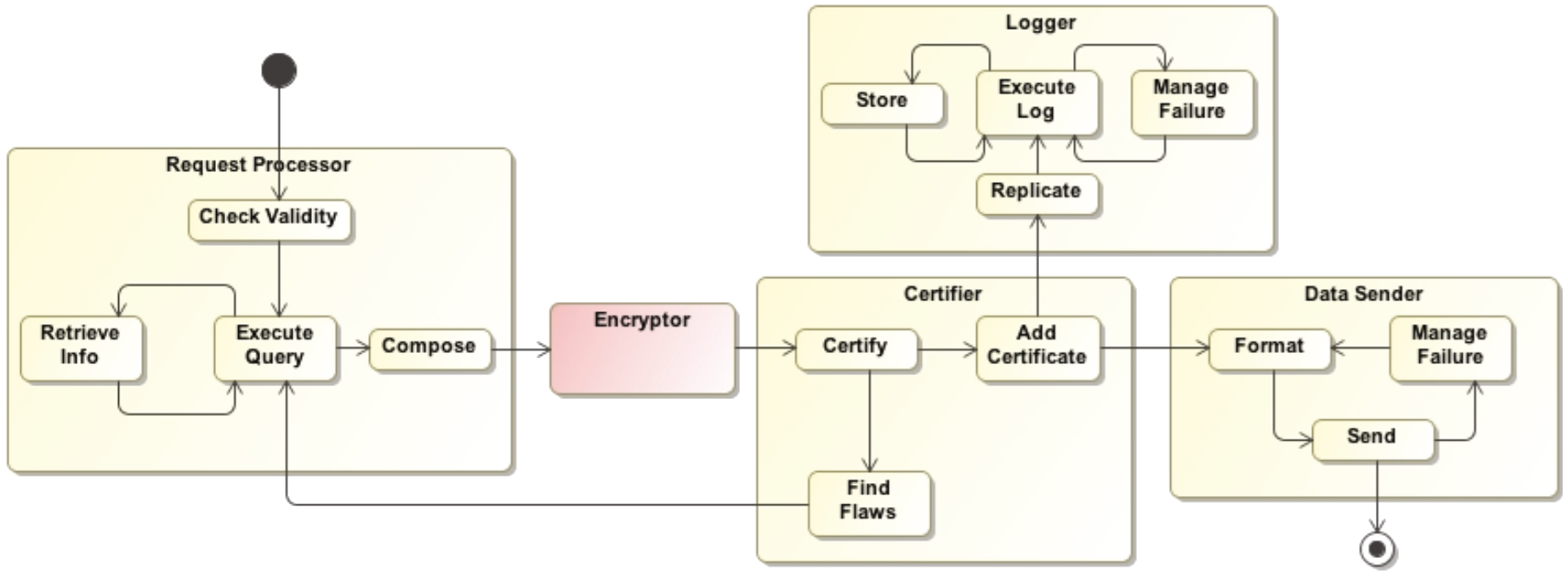}
    \caption{Secure Information Retrieval workflow}
    \label{fig:SIR}
    \Description{A workflow scheme of Secure Information Retrieval. The SIR workflow begins in the top-left corner with the Request Processor, which validates the request and initiates an iterative cycle of querying and retrieving information until the data is compiled. The package then passes through the Encryptor, a central security module that encrypts the output before sending it to the Certifier. At this stage, a quality control check verifies for anomalies: if any are detected, the system activates a feedback loop that returns the flow to the initial query; if successful, a digital certificate is applied. This operation splits the path: upwards, it activates the Logger module, which manages the parallel replication and persistent archiving of the activity, whilst to the right it proceeds to the Data Sender. Here, the data is formatted and transmitted; if transmission fails, a local recovery mechanism returns the packet to the formatting stage for a new attempt. The process formally concludes at the bottom-right, ensuring complete traceability and data integrity.}
\end{figure*}
\section{Related Work}
\label{sec:relwork}

The problem addressed in this paper lies at the intersection of several research areas, including incremental verification, runtime verification under system variation, behavioural equivalences, action refinement, and model learning from execution data. These areas provide important background, but none of them directly addresses the question considered here: how to infer, at runtime and from local trace information only, whether newly observed actions play the same monitoring-relevant role as previously known ones.

\subsection{Incremental and adaptive verification}

A first line of related work concerns incremental verification of evolving systems. As discussed in the literature \cite{ghezzi_evolution_2012}, incremental approaches can exploit modular encapsulation \cite{seco_robust_2020}, change anticipation \cite{hutchison_lover_2013}, or syntax-driven comparison \cite{bianculli_syntactic-semantic_2013, bianculli_syntactic-semantic_2015} to avoid re-verifying a system from scratch after modification. In all these cases, the common goal is to reuse previous verification efforts when the system changes.

Our work shares the general objective of avoiding full re-analysis after evolution, but differs in both scope and method. We do not assume access to source-level changes, explicit change sets, or a known boundary separating the modified fragment from the unmodified one. Instead, we focus on the runtime setting, where the only available evidence may be the observed action stream. For this reason, our approach does not compare program syntax or global models, but relies on local structural regularities in execution traces.

\subsection{Runtime verification under variation}

A second relevant area is runtime verification for systems that evolve over time. Traditional RV assumes a fixed specification and a fixed alphabet of observable actions \cite{bartocci_lectures_2018, falcone_tutorial_2013}. This assumption is often inadequate when systems undergo reconfiguration, component replacement, or changes to the observable interface.

Some approaches extend RV to richer action domains, for example, by using register automata to deal with dynamically generated identities and data values \cite{grigore_runtime_2015}. These techniques increase the expressive power of monitoring, but they still assume that the monitoring structure is known in advance. They do not address the problem considered here, namely, inferring correspondences between old and new action labels when the observable alphabet itself evolves.

Other approaches to adaptive RV update monitors based on externally provided knowledge, such as human-defined update patterns or explicit specification changes \cite{carwehl_runtime_2023}. These methods are relevant because they acknowledge that monitors may need to evolve together with the system. However, they typically assume that the transformation to apply is already known. By contrast, our interest lies in the preceding step: deriving candidate correspondences between actions directly from runtime observations, to support monitor or specification adaptation when no explicit mapping is provided a priori.

\subsection{Behavioural equivalences and action refinement}

A classical way to compare systems is through behavioural equivalences, such as trace equivalence or bisimulation \cite{milner_communication_2006,glabbeek_linear_1990}. These notions are foundational in concurrency theory and provide robust criteria for establishing whether two systems exhibit the same observable behaviour.

However, such equivalences are not well-suited to the problem studied in this paper. First, they are inherently global: they compare systems based on their overall behaviour rather than on partial, incrementally available observations. Second, they presuppose a stable interpretation of action labels, whereas our setting is precisely one in which labels may change, 
or be replaced. For us, the goal is not to prove that two systems are behaviourally identical, but to identify whether some actions in the evolved system are plausible candidates to preserve the operational role of previously monitored ones.

A related but distinct body of work is action refinement, where an abstract action is replaced by a more detailed behaviour while preserving correctness properties \cite{aceto_towards_1993}. This perspective is usually top-down and design-time oriented: the refinement relation is part of the system development process and its direction is known in advance. Our setting is different. We observe the effects of evolution a posteriori, at runtime, and seek to reconstruct candidate correspondences between actions from their observed trace contexts rather than from a predefined refinement relation.

\subsection{Process mining and evolution-aware monitoring}

Another natural point of comparison is process mining and, more generally, model learning from execution logs \cite{van_der_aalst_process_2016, germiniani_systematic_2025}. These approaches aim to reconstruct behavioural models from traces and are highly effective when sufficient data and offline analysis are available.

Our approach is deliberately lighter. We do not attempt to learn a full process model or to recover the complete structure of the updated system. Instead, we restrict attention to local contextual information that can be extracted incrementally from traces and used online. This sacrifices expressive power, but makes the method more compatible with runtime verification constraints.

Finally, some works study the monitoring of systems across version changes by combining obligations inherited from a previous version with the specification of the new version \cite{zhang_using_2006, nahabedian_assured_2016, chait-roth_consistent_2026}. These approaches are complementary to ours: they assume that a suitable specification for the evolved system is available, and focus on how to manage the transition. In contrast, we address an earlier problem: how to recover a plausible correspondence between the old and the new action vocabularies when such a specification is not yet available.

Overall, the existing literature provides techniques for reusing verification effort, updating monitors with external guidance, comparing complete behaviours, refining actions, or learning models from logs. Our contribution is orthogonal to these directions: we study a local and runtime-computable criterion for relating actions across alphabet evolution, with the specific goal of supporting monitoring continuity in evolving systems.
\section{Methodology}
\label{sec:methodology}

Our goal is to support runtime monitoring when the observable alphabet of a system evolves across versions. We consider systems whose implementation or internal structure may change while preserving, at least in part, their high-level functional intent. In these situations, a monitor built for an earlier version may encounter previously unseen actions replacing known ones, or may observe a different decomposition of the same operational step. As a result, the monitor may lose alignment with the vocabulary used by the original specification.

The methodological assumption of this work is that such changes can often be analysed through the local context in which actions occur. Rather than comparing complete behaviours or reconstructing a global model of the updated system, we focus on the actions that precede and follow each observed action in execution traces. The intuition is that if an action in the updated system consistently appears in contexts similar to those of a previously known action, then it is a plausible candidate to preserve the same monitoring-relevant role.

This criterion is intentionally local and lightweight. It is not meant to establish full behavioural equivalence between system versions, nor to prove that two actions have the same semantics in a strong sense. Instead, it provides evidence for deriving candidate semantic correspondences between actions across the evolution of the alphabet, with the aim of supporting the adaptation of runtime monitors and specifications. The advantage of this choice is that the analysis can be carried out incrementally, directly on observed traces, without assuming explicit knowledge of the transformation that generated the updated system.


To make this setting precise, we consider a small set of recurring classes of correspondences induced by alphabet evolution. These are not assumed to be known at runtime. Instead, they serve as conceptual classes of evolution patterns. In the simplest case, evolution appears as \textbf{renaming}, where one observable action is replaced by another. More generally, an action may be replaced by a set of alternative actions (\textbf{splitting}), or multiple actions may be abstracted into a single one (\textbf{fusing}).  Furthermore, an atomic action might be refined into a concrete \textbf{sequence} of operations, or we may encounter more general \textbf{multi-mapping} situations, where the relation between old and new observable actions is neither one-to-one nor known in advance. 
Figure~\ref{fig:evolution_system} shows representative forward evolution patterns; fusing is discussed as the inverse abstraction of splitting rather than as a separate forward transformation.

These classes do not define the correspondence on their own; rather, they identify the kinds of evolutionary scenarios that the proposed framework is intended to analyse. In the next section, we formalise this idea by introducing neighbour contexts over traces, extending them to $k$-neighbour contexts, and using them to compare systems with equal or partially different observable alphabets.

\begin{figure}
\centering
\begin{tikzpicture}[>=stealth, auto, thick, node distance=1cm] 
\tikzstyle{state}=[circle, draw]

\begin{scope}[shift={(0,0)}]
    \node[draw=none] at (0,0.7) {\large $S_0$};
    \node[state] (q1) {q1};
    \node[state, below=of q1] (q2) {q2};
    \node[state, below left= of q2] (q3) {q3};
    \node[state, below right=of q2] (q4) {q4};
    \node[state, below=of q3] (q5) {q5};
    \node[state, below=of q4] (q6) {q6};
    \draw[->] (q1) -- node[left]{$b$} (q2);
    \draw[->] (q2) -- node[left]{$a$} (q3);
    \draw[->] (q2) -- node[left]{$a$} (q4);
    \draw[->] (q3) -- node[left]{$c$} (q5);
    \draw[->] (q4) -- node[left]{$d$} (q6);
\end{scope}

\begin{scope}[shift={(5,0)}]
    \node[draw=none] at (0,0.7) {\large $S_1$};
    \node[state] (q1b) {q1}; 
    \node[state, below=of q1b] (q2b) {q2};
    \node[state, below left= of q2b] (q3b) {q3};
    \node[state, below right=of q2b] (q4b) {q4};
    \node[state, below=of q3b] (q5b) {q5};
    \node[state, below=of q4b] (q6b) {q6};
    \draw[->] (q1b) -- node[left]{$b$} (q2b);
    \draw[->] (q2b) -- node[left]{$f$} (q3b);
    \draw[->] (q2b) -- node[left]{$f$} (q4b);
    \draw[->] (q3b) -- node[left]{$c$} (q5b);
    \draw[->] (q4b) -- node[left]{$d$} (q6b);
\end{scope}

\begin{scope}[shift={(0,-6.2)}]
   \node[draw=none] at (0,0.7) {\large $S_2$};
    \node[state] (q1c) {q1};
    \node[state, below=of q1c] (q2c) {q2};
    \node[state, below left= of q2c] (q3c) {q3};
    \node[state, below right=of q2c] (q4c) {q4};
    \node[state, below=of q3c] (q5c) {q5};
    \node[state, below=of q4c] (q6c) {q6};
    \draw[->] (q1c) -- node[left]{$b$} (q2c);
    \draw[->] (q2c) -- node[left]{$f$} (q3c);
    \draw[->] (q2c) -- node[left]{$e$} (q4c);
    \draw[->] (q3c) -- node[left]{$c$} (q5c);
    \draw[->] (q4c) -- node[left]{$d$} (q6c);
\end{scope}

\begin{scope}[shift={(5,-6.2)}]
    \node[draw=none] at (0,0.7) {\large $S_3$};
    \node[state] (q1f) {q1}; 
    \node[state, below=of q1f] (q2f) {q2};
    \node[state, below left= of q2f] (q3f) {q3};
    \node[state, below right=of q2f] (q4f) {q4};
    \node[state, below=of q3f] (q5f) {q5};
    \node[state, below=of q4f] (q6f) {q6};
    \node[state, below=of q5f] (q7f) {q7};
    \draw[->] (q1f) -- node[left]{$b$} (q2f);
    \draw[->] (q2f) -- node[left]{$a$} (q3f);
    \draw[->] (q2f) -- node[left]{$a$} (q4f);
    \draw[->] (q3f) -- node[left]{$e$} (q5f);
    \draw[->] (q4f) -- node[left]{$d$} (q6f);
    \draw[->] (q5f) -- node[left]{$f$} (q7f);
\end{scope}
\begin{scope}[shift={(0,-12.4)}]
   \node[draw=none] at (0,0.7) {\large $S_4$};
    \node[state] (q1e) {q1};
    \node[state, below=of q1e] (q2e) {q2};
    \node[state, below left= of q2e] (q3e) {q3};
    \node[state, below right=of q2e] (q4e) {q4};
    \node[state, below=of q3e] (q5e) {q5};
    \node[state, below=of q4e] (q6e) {q6};
    \draw[->] (q1e) -- node[left]{$b$} (q2e);
    \draw[->] (q2e) -- node[left]{$a$} (q3e);
    \draw[->] (q2e) -- node[left]{$a$} (q4e);
    \draw[->] (q3e) -- node[left]{$e$} (q5e);
    \draw[->] (q4e) -- node[left]{$f$} (q6e);
\end{scope}

\end{tikzpicture}
\caption{Evolution of the initial system $S_0$ through representative alphabet-evolution patterns: $S_1$ (renaming), $S_2$ (splitting), $S_3$ (sequence refinement), and $S_4$ (multiple mapping).
Fusing is treated as the inverse abstraction of splitting, i.e., collapsing the unshared actions of $S_2$ back into the corresponding action of $S_0$.}
\label{fig:evolution_system}
\end{figure}

\section{Neighbour Equivalence}
\label{sec:neighbour-equiv}

In this section, we formalise the local criterion introduced above. We first define neighbour contexts over traces and use them to compare systems with the same observable alphabet. We then generalise the construction to $k$-neighbours and introduce an abstraction mechanism for comparing systems with different alphabets. The purpose of this framework is not to establish full behavioural equivalence, but to derive structural evidence for \emph{candidate semantic correspondences} between actions observed across system evolution.

Let $\SysS = (Q, A, \to, q_0)$ be a Labelled Transition System (LTS), where $Q$ is a set of states, $A$ is a finite set of observable actions, $\to \subseteq Q \times A \times Q$ is the transition relation, and $q_0 \in Q$ is the initial state. We write $q \xrightarrow{a} q'$ whenever $(q,a,q') \in \to$.

In this paper, we consider \emph{finite traces} generated from the initial state. A \textit{trace} of $\SysS$ is a finite word $t = a_1 a_2 \dots a_n \in A^*$ such that there exists a corresponding path in the LTS starting from $q_0$. Let $t = a_1 a_2 \dots a_n$ be a trace. A sequence $t' \in A^*$ is a \emph{subtrace} of $t$, written $t' \sqsubseteq t$, if
there exist indices $i,j$ such that $1 \leq i \leq j \leq n$ and
\[
t' = a_i a_{i+1} \dots a_j.
\]
That is, $t'$ is a contiguous subsequence of $t$.

We denote by $\Lang{\SysS}$ the set of finite traces of $\SysS$. 
Two systems $\SysA$ and $\SysB$ over the same alphabet $A$ are deemed to be \emph{trace equivalent} if and only if $\Lang{\SysA} = \Lang{\SysB}$.

The basic intuition of our approach is that an occurrence of an action can be characterised by the actions that immediately precede and follow it in a trace.

Let $t = a_1 a_2 \dots a_n \in \Lang{\SysS}$. For each position $i \in \{1,\dots,n\}$, we define the \emph{neighbour} of the occurrence $a_i$ in $t$ as
\[
\PairTraceExt{t}{a_i} \triangleq (a_{i-1}, a_{i+1}),
\]
where we use a distinguished boundary symbol $\Terminator\notin A$, with $a_0 = \Terminator$ and $a_{n+1} = \Terminator$.

For an action $a \in A$, we define its neighbour set in the trace $t$ as
\[
\PairTraceExt{t}{a} \triangleq \{\, \PairTraceExt{t}{a_i} \mid a_i = a \,\}.
\]

We then lift this construction from traces to systems:
\[
\Pair{\SysS}{a} \triangleq \bigcup_{t \in \Lang{\SysS}} \PairTraceExt{t}{a},
\]
Finally, we define the neighbour representation of the system as
\[
    \PairSys{\SysS} \triangleq \{\, (a, \Pair{\SysS}{a}) \mid a \in A \,\}.
\]
This representation explicitly keeps the association between each action and the set of local contexts in which it occurs. This is significant because the collapse of all neighbours into a single unlabelled union would result in the loss of the identity of the central action. 


We can now define a first local equivalence between systems with the same observable alphabet.

\begin{definition}[Neighbour equivalence]
Let $\SysA$ and $\SysB$ be two systems over the same alphabet $A$. We say that $\SysA$ and $\SysB$ are \emph{neighbour equivalent}, written
\[
\SysA \simeq_N \SysB,
\]
if and only if
\[
\PairSys{\SysA} = \PairSys{\SysB}.
\]
\end{definition}

This relation compares systems' actions, action by action, through the local contexts in which each action occurs.

The connection with trace equivalence is one-sided.

\begin{theorem}
If two systems are trace equivalent, then they are neighbour equivalent:
\[
\Lang{\SysA} = \Lang{\SysB} \Longrightarrow \SysA \simeq_N \SysB.
\]
\end{theorem}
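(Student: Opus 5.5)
The plan is to observe that the neighbour representation $\PairSys{\SysS}$ depends on $\SysS$ only through its trace set $\Lang{\SysS}$ and the alphabet $A$. If it is literally a function of $\Lang{\SysS}$, equal trace sets must yield equal representations. I will make this explicit by unfolding the definitions layer by layer.

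\textbf{Trace level.} For a fixed trace $t = a_1\dots a_n$, the set $\PairTraceExt{t}{a}$ is determined by the word $t$ alone. Each occurrence $a_i = a$ contributes the pair $(a_{i-1},a_{i+1})$, with the boundary convention $a_0=a_{n+1}=\Terminator$. None of this refers to states, transitions, or the particular path realising $t$ in the LTS. This is the one point needing care: different paths in $\SysA$ and $\SysB$ may generate the same word $t$, and also within a single system. The definition of $\PairTraceExt{t}{a_i}$ is purely syntactic on the word, so no ambiguity arises.

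\textbf{System level.} For each $a\in A$,
\[
\Pair{\SysA}{a} = \bigcup_{t\in\Lang{\SysA}} \PairTraceExt{t}{a} = \bigcup_{t\in\Lang{\SysB}} \PairTraceExt{t}{a} = \Pair{\SysB}{a},
\]
where the middle equality rewrites the index set using $\Lang{\SysA}=\Lang{\SysB}$.

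\textbf{Conclusion.} Since both systems share the alphabet $A$ (trace equivalence is only defined in that case), we have
\[
\PairSys{\SysA}=\{(a,\Pair{\SysA}{a})\mid a\in A\}=\{(a,\Pair{\SysB}{a})\mid a\in A\}=\PairSys{\SysB}.
\]
This is exactly $\SysA\simeq_N\SysB$. There is no genuine obstacle here; the only subtlety is the path-independence noted above.
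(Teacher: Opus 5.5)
Your proof is correct and is essentially the paper's argument: neighbour sets are defined purely from the words in the trace language, so equal trace languages give equal neighbour sets for every action. The paper phrases this as an element-wise double inclusion, while you rewrite the index set of the defining union directly, which amounts to the same thing.
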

\begin{proof}
Assume $\Lang{\SysA} = \Lang{\SysB}$. We must show that $\PairSys{\SysA} = \PairSys{\SysB}$, which holds if and only if $\Pair{\SysA}{a} = \Pair{\SysB}{a}$ for all $a \in A$.

Let $a \in A$ and take any neighbour pair $(u,v) \in \Pair{\SysA}{a}$. By definition, there exists a trace $t \in \Lang{\SysA}$ and a position $i$ such that $a_i = a$ and $\PairTraceExt{t}{a_i}=(u,v)$. 

Since $\Lang{\SysA} = \Lang{\SysB}$, the same trace $t$ also belongs to $\Lang{\SysB}$. As neighbour pairs are defined solely from the trace, the same occurrence of $a$ in $t$ yields the same neighbour pair in $\SysB$. Hence $(u,v) \in \Pair{\SysB}{a}$, proving $\Pair{\SysA}{a} \subseteq \Pair{\SysB}{a}$.

The converse inclusion follows symmetrically. Therefore, $\Pair{\SysA}{a} = \Pair{\SysB}{a}$ for all $a \in A$, and thus $\PairSys{\SysA} = \PairSys{\SysB}$, i.e., $\SysA \simeq_N \SysB$.

\end{proof}

The converse does not hold, meaning that neighbour equivalence is weaker than trace equivalence, as the following two examples show.

\begin{example}\label{es:repeated_a}   
Consider the two single-trace systems 
\[
\Lang{\SysA}=\{baaaaac\}, \qquad \Lang{\SysB}=\{baaac\}.
\]
These systems induce the same neighbour set for action $a$, even though their traces differ. Intuitively, once a repeated block becomes long enough, the immediate-neighbour view can no longer distinguish its exact length.
\end{example}

\begin{example}\label{es:long-range}   
A second example concerns long-range dependencies across traces. Consider
\[
\Lang{\SysA}=\{axyb,\; cxyd\}, \qquad
\Lang{\SysB}=\{axyd,\; cxyb\}.
\]
Here, the neighbours' sets coincide, but the two languages are different:
\[
\PairSys{\SysA}=\PairSys{\SysB}
\qquad\text{while}\qquad
\Lang{\SysA}\neq \Lang{\SysB}.
\]
Thus, equality of neighbours does not imply equality of trace languages.
\end{example}

To increase the discriminating power of the local criterion, we generalise neighbours from immediate neighbours to windows of size $k$.

Let $t = a_1 a_2 \dots a_n$. For $k \geq 1$, the \emph{$k$-neighbour} of the occurrence $a_i$ is defined as
\[
\KPairTraceExt{t}{a_i} \triangleq (a_{i-k}, \dots, a_{i-1}, a_{i+1}, \dots, a_{i+k}),
\]
again using the boundary symbol $\Terminator$ whenever an index falls outside the trace.

As before, we define
\[
\KPairTraceExt{t}{a} \triangleq \{\, \KPairTraceExt{t}{a_i} \mid a_i = a \,\},
\qquad
\KPairSysComplete{k}{\SysS}{a} \triangleq \bigcup_{t \in \Lang{\SysS}} \KPairTraceExt{t}{a},
\]
and
\[
\KPairSysExt{k}{\SysS} \triangleq \{\, (a, \KPairSysComplete{k}{\SysS}{a}) \mid a \in A \,\}.
\]

To maintain a clear description we write $\PairSys{}$ to identify $\PairSys{1}$, and we make $k$ explicit in the other cases.

\begin{definition}[$k$-Neighbour Equivalence]
Let $\SysA$ and $\SysB$ be two systems over the same alphabet $A$. For $k \geq 1$, we say that they are \emph{$k$-neighbour equivalent}, written
\[
\SysS_1 \simeq_N^k \SysS_2 \iff \KPairSys{\SysS_1} = \KPairSys{\SysS_2}.
\]
\end{definition}

\begin{lemma}[Neighbour Projection]\label{lem:projection}
For every system $\SysS$, every $a \in A$ and every $k \geq 1$, every $k$-neighbour can be obtained as the projection of a $(k+1)$-neighbour. Formally,
\[
\KPairSysComplete{k}{\SysS}{a}
=
\{\, \pi_k(v) \mid v \in \KPairSysComplete{k+1}{\SysS}{a}\,\},
\]
where $\pi_k$ is the projection removing the outermost actions of a $(k+1)$-neighbour:

$
\qquad \pi_k(a_{i-(k+1)},\dots,a_{i-1},a_{i+1},\dots,a_{i+(k+1)})
$

$
\qquad =
(a_{i-k},\dots,a_{i-1},a_{i+1},\dots,a_{i+k}).
$
\end{lemma}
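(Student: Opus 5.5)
The plan is to prove the set equality by double inclusion, working at the level of individual occurrences in traces. The key observation is pointwise. For any trace $t = a_1\dots a_n \in \Lang{\SysS}$ and any position $i$, the $k$-neighbour of the occurrence $a_i$ is exactly $\pi_k$ applied to its $(k+1)$-neighbour:
\[
\KPairTraceExt{t}{a_i} = \pi_k\bigl(v_i\bigr), \qquad v_i = (a_{i-(k+1)},\dots,a_{i-1},a_{i+1},\dots,a_{i+(k+1)}).
\]
Both tuples are read off the same padded word, namely $t$ extended on each side with copies of $\Terminator$. Dropping the two outermost entries $a_{i-(k+1)}$ and $a_{i+(k+1)}$ leaves precisely the entries with indices $i-k,\dots,i-1,i+1,\dots,i+k$.

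I will first check that the boundary convention is consistent. Whenever an index in $[i-k,i+k]$ falls outside $\{1,\dots,n\}$, both windows use $\Terminator$ at that index. Since $\pi_k$ is purely positional, it does not care whether the removed outermost entries are actions or $\Terminator$. So the pointwise identity holds for every $i$, including positions near the ends of short traces, for example when $n < k$.

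For the inclusion $\subseteq$, take $w \in \KPairSysComplete{k}{\SysS}{a}$. By definition there exist $t \in \Lang{\SysS}$ and $i$ with $a_i = a$ and $w = \KPairTraceExt{t}{a_i}$. The same occurrence gives $v_i \in \KPairSysComplete{k+1}{\SysS}{a}$, and $\pi_k(v_i) = w$. For the inclusion $\supseteq$, take any $v \in \KPairSysComplete{k+1}{\SysS}{a}$. It arises from some occurrence $a_i = a$ in some $t \in \Lang{\SysS}$, so $\pi_k(v) = \KPairTraceExt{t}{a_i} \in \KPairSysComplete{k}{\SysS}{a}$.

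One subtlety needs attention: $\pi_k$ is applied to a tuple $v$, not to an occurrence. A given tuple $v$ might arise from several occurrences, but all of them yield the same $\pi_k(v)$, because $\pi_k$ is a function of the tuple alone. So no well-definedness issue arises.

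The main obstacle is really just the careful handling of $\Terminator$ padding in the pointwise identity; the rest is unfolding the definitions of the unions over $\Lang{\SysS}$. As an immediate corollary, which I would note, $\simeq_N^{k+1}$ implies $\simeq_N^{k}$.
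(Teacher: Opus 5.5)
Your proof is correct and follows the same route as the paper's: both rest on the pointwise observation that dropping the two outermost entries of the $(k+1)$-neighbour of an occurrence yields its $k$-neighbour. Your version is more explicit, spelling out the reverse inclusion, the boundary-symbol padding, and the fact that $\pi_k$ acts positionally on tuples, all of which the paper's one-line argument leaves implicit.
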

\begin{proof}
By definition, a $(k+1)$-neighbour records the $k+1$ actions immediately preceding and following a given position in a trace. Removing the outermost actions yields exactly the $k$ actions on each side, which correspond to the $k$-neighbour of the same position. Hence, every $k$-neighbour arises as the projection of a $(k+1)$-neighbour, and the equality follows.
\end{proof}
The family of $k$-neighbour equivalences forms a hierarchy: increasing $k$ increases local visibility and therefore the discriminating power of the comparison. In particular, every $(k+1)$-neighbour induces a corresponding $k$-neighbour by projection, so we have that the equivalence relations $\simeq_k$ form a hierarchy whose discriminating power increases with $k$, as stated by the following theorem.
\begin{theorem}[Hierarchy of $k$-neighbour equivalences]\label{thm:hierarchy}
For every $k \geq 1$, if $S_1 \simeq_N^{k+1} S_2$, then $S_1 \simeq_N^k S_2$.
Moreover, the implication is strict in general.
\end{theorem}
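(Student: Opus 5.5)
The implication follows directly from Lemma~\ref{lem:projection}. Assume $S_1 \simeq_N^{k+1} S_2$, i.e.\ $\KPairSysExt{k+1}{S_1} = \KPairSysExt{k+1}{S_2}$. Since both representations are sets of pairs indexed by the actions of the common alphabet $A$, this gives $\KPairSysComplete{k+1}{S_1}{a} = \KPairSysComplete{k+1}{S_2}{a}$ for every $a \in A$.

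Applying the Neighbour Projection Lemma to each system, we get
\[
\KPairSysComplete{k}{S_1}{a} = \{\pi_k(v) \mid v \in \KPairSysComplete{k+1}{S_1}{a}\} = \{\pi_k(v) \mid v \in \KPairSysComplete{k+1}{S_2}{a}\} = \KPairSysComplete{k}{S_2}{a}.
\]
The middle equality holds because $\pi_k$ is a fixed function applied to equal sets. One point should be made explicit: the projection also behaves correctly at the trace boundaries. Positions that fall outside the trace carry $\Terminator$ in both the $(k+1)$- and the $k$-window, so removing the two outermost entries yields exactly the $k$-neighbour of the same occurrence. With this, equality holds for all $a$, hence $\KPairSysExt{k}{S_1}=\KPairSysExt{k}{S_2}$, i.e.\ $S_1 \simeq_N^k S_2$.

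For strictness, I would give, for each $k \ge 1$, a pair of systems that are $k$-neighbour equivalent but not $(k+1)$-neighbour equivalent. The natural generalisation of Example~\ref{es:long-range} is to pad the middle with a block of $k$ actions:
\[
\Lang{S_1}=\{a\,x_1\cdots x_k\,b,\; c\,x_1\cdots x_k\,d\},\qquad \Lang{S_2}=\{a\,x_1\cdots x_k\,d,\; c\,x_1\cdots x_k\,b\},
\]
with distinct fresh symbols $x_1,\dots,x_k$. Since $\Lang{\cdot}$ denotes all finite traces generated from $q_0$, the languages are really prefix-closed. I would therefore take the prefix closures of these sets. Prefixes end in $\Terminator$ earlier, but they are shared between the two systems because both have the same prefixes up to length $k+1$.

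\textbf{Equality of $k$-windows.} Consider the window around any occurrence. For $x_j$, both sides have span at most $k$, so the window cannot see both endpoints $a/c$ and $b/d$ at once; I would check that the multiset of windows coincides by pairing the occurrence in one trace of $S_1$ with the corresponding occurrence in the appropriate trace of $S_2$. The windows around $a$ see $\Terminator^k$ on the left and $x_1\cdots x_k$ on the right, and never reach $b/d$. The same argument applies to $b,c,d$ by symmetry.

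\textbf{Inequality of $(k+1)$-windows.} The $(k+1)$-window of $a$ in $S_1$ contains $x_1\cdots x_k b$, while in $S_2$ it contains $x_1\cdots x_k d$, so $\KPairSysComplete{k+1}{S_1}{a}\neq\KPairSysComplete{k+1}{S_2}{a}$.

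The main obstacle I expect is the careful verification of the $k$-window equality for the middle symbols $x_j$ together with prefix traces. If this proves messy, I would instead choose a cruder witness in the spirit of Example~\ref{es:repeated_a}, namely $\{b\,a^{m}\,c\}$ versus $\{b\,a^{m'}\,c\}$ with $m,m' > 2k$ suitably chosen, where only windows near the ends differ.
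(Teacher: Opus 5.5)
Your inclusion argument is the paper's own: apply Lemma~\ref{lem:projection} action by action. The boundary remark is fine. The strictness witness, however, fails at every level $k$.

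In $a\,x_1\cdots x_k\,b$ the endpoints are only $k+1$ positions apart. Each $x_j$ lies at distance $j\le k$ from $a$ and at distance $k+1-j\le k$ from $b$. Since a $k$-window reaches $k$ positions on \emph{each} side, the $k$-window of every $x_j$ contains both endpoints. Your claim that it ``cannot see both endpoints at once'' is therefore false. Concretely, the $k$-window of $x_1$ in $a\,x_1\cdots x_k\,b$ contains both $a$ and $b$, while in $S_2$ every trace containing $a$ ends in $d$. Hence $S_1\not\simeq_N^k S_2$, and the pair witnesses nothing. For $k=1$ this is immediate: the neighbour sets of $x_1$ are $\{(a,b),(c,d)\}$ versus $\{(a,d),(c,b)\}$. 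Your family also does not reduce to Example~\ref{es:long-range} at $k=1$; that example has a middle block of length $2$, not $1$. (Your $(k+1)$-inequality claim for $a$ is true, but it is moot.)

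The long-range idea works if you pad with $2k$ symbols instead. Compare $\{a\,x_1\cdots x_{2k}\,b,\ c\,x_1\cdots x_{2k}\,d\}$ with $\{a\,x_1\cdots x_{2k}\,d,\ c\,x_1\cdots x_{2k}\,b\}$.
\begin{itemize}
\item The $k$-window of $x_j$ reaches $a$ or $c$ iff $j\le k$, and reaches $b$ or $d$ iff $j\ge k+1$, so never both.
\item The windows of $a,b,c,d$ stay inside the block, and proper prefixes are shared, so the $k$-neighbour sets coincide.
\item The $(k+1)$-window of $x_k$ contains both $a$ and $b$ in $S_1$ and never in $S_2$. Note that the distinguishing occurrence is $x_k$, not $a$, whose $(k+1)$-window reaches only $x_{k+1}$.
\end{itemize}

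Your fallback is the paper's route: Example~\ref{es:repeated_a} with $3$ versus $5$ repetitions, then ``extend the block''. However, ``suitably chosen'' must be made precise. The $k$-window of an $a$ at distance $i$ from $b$ and $j$ from $c$ is determined by $(\min(i,k+1),\min(j,k+1))$. Hence any $m,m'\ge 2k+1$ give $k$-equivalence. Inequivalence at level $k+1$ additionally requires $m\ne m'$ and $\min(m,m')\le 2k+2$. For example, take $m=2k+1$ and $m'=2k+3$. The central $a$ of $b\,a^{2k+1}\,c$ has a $(k+1)$-window containing both $b$ and $c$, and no occurrence in $b\,a^{2k+3}\,c$ (or its prefixes) has such a window.
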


\begin{proof}
The proof consists of two parts: showing the inclusion $\simeq_{k+1} \subseteq \simeq_k$, and proving that such inclusion is strict.

\textbf{Inclusion.}
Assume $\SysS_1 \simeq_N^{k+1} \SysS_2$. By definition, their $(k+1)$-neighbour sets coincide:
\[
\KPairSysExt{k+1}{\SysS_1} = \KPairSysExt{k+1}{\SysS_2}.
\]
By applying the projection operator $\pi_k$ (Lemma~\ref{lem:projection}) for each actions to both sets we obtain:
\[
\KPairSys{\SysS_1} = \KPairSys{\SysS_2}.
\]
This implies $\SysS_1 \simeq_N^k \SysS_2$, which concludes the first part.

\textbf{Strictness.} 
To prove strictness, it suffices to show that there exist systems that are $k$-equivalent but not $(k+1)$-equivalent. 
For the base case $k=1$, Example~\ref{es:repeated_a} provides a concrete witness. 
In the example, we established that $\SysA \simeq_{\PairSys{}} \SysB$. However, inspecting the $2$-neighbours for action $a$ yields:
\[
\KPairSysExt{2}{\SysA}(a) \neq \KPairSysExt{2}{\SysB}(a).
\]
Consequently, $\SysA \not\simeq^2_{\PairSys{}} \SysB$, proving that $\simeq^2_{\PairSys{}} \subset \simeq_{\PairSys{}}$ strictly. 
The same construction can be generalised to arbitrary $k \geq 1$ by extending the repeated block, yielding systems that are $k$-neighbour equivalent but not $(k+1)$-neighbour equivalent.
Therefore, the hierarchy is strict at every level.
\end{proof}

So far, we have considered neighbour contexts for single action occurrences. However, as described previously (see Figure~\ref{fig:evolution_system}), alphabet evolution often involves replacing a single action with a sequence of actions or mapping one sequence of actions onto another. To capture these scenarios without losing the structural information of the sequence, we lift the definition of neighbour from single actions to finite sequences of actions $w \in A^+$.

\begin{definition}[Subtrace Neighbour]
Let $t = a_1 a_2 \dots a_n \in \Lang{\SysS}$ be a finite trace. For any sequence of actions $w$ of length $m \ge 1$, we define its neighbour set in the trace $t$ as the set of all boundary pairs surrounding every occurrence of $w$ in $t$:
\[
\PairTraceExt{t}{w} \triangleq \{\, (a_{i-1}, a_{i+m}) \mid 1 \le i \le n - m + 1 \text{ and } a_i \dots a_{i+m-1} = w \,\},
\]
where we assume 
$a_0 = \Terminator$ and $a_{n+1} = \Terminator$.
\end{definition}

Accordingly, we lift this construction to define the neighbour set of $w$ in the entire system $\SysS$:
\[
\Pair{\SysS}{w} \triangleq \bigcup_{t \in \Lang{\SysS}} \PairTraceExt{t}{w}
\]

The next definition characterises the set of subtraces of a system $\SysS$ formed by actions belonging to a subset $V$ of its alphabet $A$ and not further extendable by actions in $V$.

\begin{definition}[Maximal subtraces over $V$]
\label{def:variant_sequence}
Let $A$ be the observable alphabet of a system $\SysS$ and $V \subseteq A$. Given a finite trace $t \in \Lang{\SysS}$, we let 
$$
\begin{array}{@{}r@{\ }l}
V(t) \triangleq \{ w \sqsubseteq t \mid & w \in V^*,
N_t(w) = (a, b) \Longrightarrow
a, b \in (A\setminus V) \cup \{\Terminator\}\: \}
\end{array}
$$
We lift this construction from single traces to an entire system $\SysS$ and let: 
$$
V(S)\triangleq \bigcup_{t \in \Lang{\SysS}} V(t).
$$
\end{definition}

To conclude the definition of the neighbour representation, we let
\[\Pair{\SysS}{V} = \bigcup_{w \in V(S)} \Pair{\SysS}{w}.\]
to denote the set of neighbours of the maximal subtraces over $V(S)$.

To compare systems modulo unshared actions, it is convenient to treat each maximal subtrace $w$ over $V$ as a single abstract unit. Otherwise, the internal decomposition of such subtraces may interfere with neighbour extraction, especially for $k$-neighbour contexts. For this reason, we introduce a lifted trace representation in which maximal subtraces in $V(S)$ are handled as atomic tokens.

\begin{definition}[Trace Lifting via $V(\SysS)$]\label{def:trace-lifting}
Let $\SysS$ be an LTS and $V \subseteq A$. We define the \emph{lifted alphabet} as:
\[
A^\sharp \triangleq (A \setminus V) \cup V(S)
\]
Given a trace $t \in \Lang{\SysS}$, its \emph{lifted version} $t^\sharp \in (A^\sharp)^*$ is the unique sequence of tokens $x_1 x_2 \dots x_m$ such that their concatenation (interpreting each $x_j \in V(t)$ as a sequence over $A$) equals $t$, and for each $j \in \{1, \dots, m\}$, either $x_j \in (A \setminus V)$ or $x_j \in V(t)$.
Each maximal subtrace $w \in V(t)$ is treated as a single atomic element in $t^\sharp$.
\end{definition}

From this point onward, all traces are considered in their lifted form. Accordingly, neighbour contexts are computed over lifted traces $t^\sharp$, so that each maximal unshared subtrace $w$ occupies exactly one position in the local context.

We now introduce an abstraction mechanism for unshared actions while preserving their local structural role. We first define the abstraction on a single system, and then lift it to comparisons between systems with different alphabets.

\begin{definition}[Abstraction function]
\label{def:mu_v}
Let $\SysS$ be an LTS and let $V \subseteq A$ be a subset of the alphabet. Let $\Mask \notin A$ be a special symbol. We define the abstraction function:

\[
\Masking_V : A^\sharp \cup \{\Terminator\}  \longrightarrow (A\setminus V) \cup \{\Mask,\Terminator\}
\]
as
\[
\MaskFunExt{V}{x} \triangleq 
\begin{cases} 

\Mask & \text{if } x \in V(S),\\
x & \text{if } x \in ( A \setminus V) \cup \{\Terminator\}\\
\end{cases}
\]
\end{definition}

The function $\Masking_V$ maps every maximal unshared subtrace to the same abstract symbol $!$, while preserving the shared actions and the boundary symbol.

Next, we define the application of the function $\Masking_V$ to the neighbour contexts using the lifted traces.

\begin{definition}[Abstract neighbour contexts on lifted traces]
\label{df:renaming}
Let $\SysS$ be an LTS, with $V \subseteq A$ as its subset, and let $\Masking_V : A^\sharp \cup \{\Terminator\} \to (A \setminus V) \cup \{\Mask,\Terminator\}$ be the associated abstraction function. 
Given a trace $t \in \Lang{\SysS}$ and its lifted version $t^\sharp = x_1 \dots x_m \in (A^\sharp)^*$, 
we define the abstract neighbour of a token $x_j \in A^\sharp$ in $t^\sharp$ as
\[
\MaskFunExt{V}{\PairTraceExt{t^\sharp}{x_j}} \triangleq (\MaskFunExt{V}{x_{j-1}}, \MaskFunExt{V}{x_{j+1}})
\]
where $x_0 = \Terminator$ and $x_{m+1} = \Terminator$.

Accordingly, for any token $x \in A^\sharp$, we define its abstract neighbour set in the system $\SysS$ as:
\[
\MaskFunExt{V}{\Pair{\SysS}{x}} 
 \triangleq \bigcup_{t \in \Lang{\SysS}} \{\, \MaskFunExt{V}{\PairTraceExt{t^\sharp}{x_j}} \mid 1 \le j \le m \text{ and } x_j = x \,\}.
\]

Finally, we define the \textit{global abstract neighbour set} 
of the unshared part of the system as
\[
\MaskFunExt{V}{\Pair{\SysS}{V}} \triangleq 
\bigcup_{w \in V(S)} 
\MaskFunExt{V}{\Pair{\SysS}{w}}.
\]
\end{definition}

We extend the same construction to $k$-neighbour contexts over lifted traces.

\begin{definition}[Abstract $k$-neighbour contexts on lifted traces]
\label{def:absneighcontlt}
Let $t^\sharp = x_1x_2\ldots x_m \in (A^\sharp)^*$ be a lifted trace. For $k \ge 1$, the abstract $k$-neighbour of a token $x_j$ in $t^\sharp$ is defined as
\[
\MaskFunExt{V}{\KPairTraceExt{t^\sharp}{x_j}} \triangleq (\MaskFunExt{V}{x_{j-k}},\dots,\MaskFunExt{V}{x_{j-1}}, \MaskFunExt{V}{x_{j+1}},\dots,\MaskFunExt{V}{x_{j+k}})
\]
where out-of-bound indices are interpreted as the boundary symbol $\Terminator$.
	
Accordingly, the abstract $k$-neighbour set of a token $x \in A^\sharp$ in the system $S$ is obtained by collecting all such abstract $k$-neighbours over the lifted traces of $S$.
\end{definition}

By computing $k$-neighbour contexts on lifted traces rather than on concrete traces, each maximal unshared subtrace occupies exactly one position in the $k$-neighbour contexts. This prevents the abstraction from leaking the exact internal length of unshared sequences and ensures that the comparison remains stable under sequence refinement.

Having defined the abstraction for a single system, we can now lift it to comparisons between two systems, $S_1$ and $S_2$, with different observable alphabets. By taking $V = A_1 \triangle A_2$, the abstraction masks their unshared actions while preserving the local contexts induced by the shared ones.



To conclude this section, we define neighbour equivalence modulo $\Masking_V$. This notion allows us to compare two systems with different observable alphabets by abstracting away their unshared actions and focusing on the local contexts that remain structurally comparable.

\begin{definition}[Neighbour equivalence modulo $\Masking_V$]
\label{def:mu-neighbour-eq}

Let $\SysA$ and $\SysB$ be two LTSs with observable alphabets $A_1$ and $A_2$. Let $V = A_1 \triangle A_2$ be the set of unshared actions, and let $\Masking_V$ be the corresponding abstraction function defined over the lifted alphabets of $S_1$ and $S_2$.

We say that $\SysA$ and $\SysB$ are \emph{neighbour equivalent modulo $\Masking_V$}, written $\SysA \sim_{\Masking_V} \SysB$, if the following conditions hold:
\begin{enumerate}
    \item \textbf{Equivalence of abstract neighbour contexts for unshared actions:}\label{cond:eq-not-shared}

    \[
    \MaskFunExt{V}{\Pair{\SysA}{V}} = \MaskFunExt{V}{\Pair{\SysB}{V}}
    \]

    \item \textbf{Equivalence of abstract neighbour contexts for shared actions:}\label{cond:eq-shared}
    \[
    \forall \GenericAct \in A_1 \cap A_2, \;
    \MaskFunExt{V}{\Pair{\SysA}{\GenericAct}} = \MaskFunExt{V}{\Pair{\SysB}{\GenericAct}}
    \]
\end{enumerate}
\end{definition}

In operational terms, this definition requires two complementary forms of compatibility. First, the unshared parts of the two alphabets must induce the same abstract boundary contexts. Second, the shared actions must preserve the same local abstract contexts once the unshared fragments have been masked. Together, these two conditions allow the comparison of systems with different observable alphabets while ignoring differences internal to the unshared parts.

\begin{example}\label{ex:sequence_equivalence}
To illustrate how the framework works in practice, consider the sequence evolution scenario described by the two systems $S_0$ and $S_3$ shown in Figure~\ref{fig:evolution_system}. In this case, the action $c$ is refined into the sequence of operations $ef$.
Let $S_0$ and $S_3$ be two systems with traces $\Lang{S_0} = \{bac,bad\}$ and $\Lang{S_3} = \{baef, bad\}$.
The alphabets are $A_0 = \{a,b,c,d\}$ and $A_3 = \{a,b,d,e,f\}$.
Hence, the unshared and shared actions are
\[
V = A_0 \triangle A_3 = \{c, e, f\}, \qquad A_0 \cap A_3 = \{a,b,d\}.
\]

Under trace lifting, the maximal unshared subtrace $c$ in $S_0$ and the maximal unshared subtrace $ef$ in $S_3$ are each treated as a single token. 

\paragraph{Condition~\ref{cond:eq-not-shared}.} From Definition~\ref{def:variant_sequence} we obtain:
\[
V(S_0) = \{c\}, \qquad V(S_3) = \{ef\}.
\]
Their neighbour sets are
\[
\Pair{S_0}{c} = \{(a,\Terminator)\} \quad \text{and} \quad \Pair{S_3}{ef} = \{(a, \Terminator)\}.
\]
Therefore,
\[
\Masking_V(\Pair{S_0}{V})=\Masking_V(\Pair{S_3}{V})=\{(a, \Terminator)\},
\]
so the first condition is satisfied.
    
\paragraph{Condition~\ref{cond:eq-shared}.} We now consider the shared actions $a$, $b$ and $d$.
For action $a$, the neighbour sets differ before abstraction:
\[
\Pair{S_0}{a} = \{(b, c),(b,d)\} \qquad
\Pair{S_3}{a} = \{(b, ef),(b,d)\}.
\]
However, after abstraction, we obtain
\[
\MaskFunExt{V}{\Pair{S_0}{a}} = \{(b, \Mask),(b,d)\} = \MaskFunExt{V}{\Pair{S_3}{a}}.
\]
For action $b$, the neighbour sets already coincide: $\Pair{S_0}{b} = \{ (\Terminator,a)\}$ and $\Pair{S_3}{b} = \{(\Terminator,a)\}$, hence
\[
\MaskFunExt{V}{\Pair{S_0}{b}} = \{(\Terminator,a)\} = \MaskFunExt{V}{\Pair{S_3}{b}}.
\]
For action $d$, we similarly have $\Pair{S_0}{d} = \{ (a, \Terminator)\}$ and $\Pair{S_3}{d} = \{(a,\Terminator)\}$, and therefore
\[
\MaskFunExt{V}{\Pair{S_0}{d}} = \{(a,\Terminator)\} = \MaskFunExt{V}{\Pair{S_3}{d}}.
\]
Since both conditions hold, we conclude that $S_0 \sim_{\Masking_V} S_3$.
\end{example}

This abstract comparison covers the evolution scenarios introduced in Section~\ref{sec:methodology}, including renaming, splitting, fusing, and more general many-to-many correspondences. 
If two systems are neighbour equivalent modulo $\Masking_V$, then their alphabet differences are locally functionally similar under the chosen abstraction. This supports the hypothesis that when a system refines its behaviour while maintaining the same functionality, some refined actions of the alphabets may correspond to the same monitoring-relevant role. This information can guide the adaptation of monitoring artefacts needed to preserve monitoring continuity.

\section{Monitor and Specification Adaptation}
\label{sec:spec-updat}

The framework introduced in the previous section lays the foundation for adapting runtime monitors to evolved systems. Rather than attempting a rigid, fully automated translation, which is often unfeasible for arbitrary alphabet changes, it identifies correspondences between actions in different observable alphabets by comparing their local contexts. In this section, we discuss how these neighbour equivalent actions can support the adaptation of monitors and specifications.

A useful distinction must be made between the \emph{intent of the property} and the \emph{observable vocabulary} through which that intent is monitored. In the scenarios considered here, the requirement itself is assumed to remain stable, while the set of observable actions may change due to renaming, refinement, abstraction, or loss of visibility. The purpose of adaptation is to systematically reconstruct a suitable observational interpretation of that property over the evolved alphabet.

Before detailing the adaptation patterns, it is important to emphasise that this mapping framework is designed to be as independent as possible from the specific formal language used for the specifications. While we employ standard LTL syntax to formalise the substitution rules, the underlying concepts represent general, language-agnostic adaptation strategies. 

While translating arbitrary LTL specifications across different alphabets poses known theoretical challenges, our neighbour-based compatibility provides valuable local evidence that newly observed actions preserve the same \emph{monitoring-relevant role} as previously observed ones. Accordingly, the adaptation patterns discussed below serve as practical strategies for evolving temporal specifications.

Let $\varphi$ be an LTL specification over the alphabet $A_1$ of a system $\SysA$, and let $\SysB$ be the evolved version of $\SysA$ with the modified alphabet $A_2$. Assume that $\SysA \sim_{\Masking_V} \SysB$, which indicates local compatibility between the changed actions of $\SysB$ and the missing actions of $\SysA$ under the chosen abstraction.
The goal is to determine how to adapt $\varphi$ so that monitoring remains meaningful for $\SysB$.
Furthermore, we initially define these patterns focusing on single-action substitutions (e.g., renaming, splitting) to isolate the core logic of each transformation. However, as we will demonstrate, these foundational rules can be naturally extended and generalised to accommodate sequences of actions, covering cases where atomic operations are refined into more complex executions.

\begin{definition}[Renaming adaptation]
Let $a \in A_1$ and $b \in A_2$, such that $V=\{a,b\}$. Let $\SysA \sim_{\Masking_V} \SysB$, then the \emph{renaming adaptation} of $\varphi$ is the formula
\[
\varphi[a \mapsto b].
\]
\end{definition}

Renaming is the most straightforward adaptation pattern supported by the framework. If the correspondence between $a$ and $b$ is correct and the alphabet evolution is observationally a pure renaming, replacing $a$ with $b$ perfectly preserves the intended monitoring role of the action.

\begin{definition}[Splitting adaptation]
Let $a \in A_1$ and let $U=\{u_1,\dots,u_m\}\subseteq A_2$ be a set of actions that substitutes $a$. The \emph{splitting adaptation} of $\varphi$ is the formula
\[
\varphi[a \mapsto (u_1 \vee \dots \vee u_m)].
\]
\end{definition}
The splitting adaptation provides a robust heuristic when the original action $a$ is monitored as a witness of a system step, and the actions in $U$ represent alternative observable realisations of that same step. Under this interpretation, the disjunction $u_1 \vee \dots \vee u_m$ successfully preserves the idea that any of the new actions may satisfy the role previously played by $a$. While its soundness depends on the specific occurrences of $a$ in the formula, it provides a direct, intuitive adaptation path.

\begin{definition}[Sequence adaptation]
Let $a \in A_1$ and let $w \in V(\SysB)$ be the subtrace identified as a substitute for $a$. The \emph{sequence adaptation} of $\varphi$ expands $a$ into a temporal sequence using the LTL Next operator ($\bigcirc$), yielding the formula:
\[
\varphi[a \mapsto (u_1 \wedge \bigcirc(u_2 \wedge \dots \wedge \bigcirc u_m)\dots)].
\]
\end{definition}
The sequence adaptation bridges the gap between different levels of abstraction. While splitting captures alternative single actions, sequence adaptation handles cases where an atomic system step is refined into a concrete sequence of operations. It is important to note that this substitution is not strictly sound for arbitrary LTL formulas. Rather, this rule should be interpreted as a restricted-fragment adaptation pattern, appropriate when the original action acts as a witness of an abstract step. By treating the sequence in this manner, the pattern directly leverages the neighbour analysis introduced in Section~\ref{sec:neighbour-equiv}, ensuring that the expanded temporal footprint in the monitor matches the abstract boundaries of the original action.

\begin{definition}[Fusing adaptation]
Let $U = \{a_1, \dots, a_m\} \subseteq A_1$ be a set of actions that are replaced by a single action $b \in A_2$. Let $S_1 \sim_{\mu_V} S_2$ with $V = U \cup \{b\}$, and assume that for all $a_i \in U$, $\Pair{\SysA}{a_i}$  are pairwise disjoint.

Under the additional assumption that the local contexts of the actions in $U$ are pairwise disjoint, the following contextualised replacement provides a plausible heuristic for distinguishing fused actions.
The adaptation corresponds to the formula:
\[
\varphi [ a_i \mapsto( \bigvee_{(x,y) \in N^{S_1}(a_i)}  x \wedge \bigcirc (b \wedge \bigcirc y) \ )]
\]
for each $i \in \{1, \dots, m\}$.
\end{definition}
The fusing operation combines multiple previously distinct actions into a single observable action, potentially leading to a loss of observability.
The fusing adaptation leverages the local trace structure to determine which of the original actions is currently occurring. If the actions in $U$ originally occurred in mutually disjoint contexts, conditioning the observation of $b$ on its neighbour context prevents the monitor from issuing false positives. However, this rule should be interpreted conservatively: it is not a general sound translation for arbitrary LTL formulas, but rather a context-sensitive heuristic designed to distinguish fused actions based on their local context.

\begin{remark}[Multi-mapping and Compositional Adaptation]
As established in Section~\ref{sec:neighbour-equiv}, the neighbour equivalence framework natively supports general $M$-to-$N$ (multi-mapping) correspondences by abstracting maximal unshared subtraces into atomic tokens. The framework seamlessly identifies structural compatibilities regardless of the length of the involved sequences.

However, moving from structural equivalence to explicit LTL specification adaptation introduces semantic challenges. While the patterns defined above provide direct substitution rules for foundational configurations (i.e., $1$-to-$1$, $1$-to-$N$, and $N$-to-$1$), creating a single, general LTL syntactic rule for arbitrary $M$-to-$N$ mappings is highly non-trivial. 

Instead, complex adaptations can often be constructed compositionally. For instance, one may conceptually apply a fusing adaptation to the $M$ original actions to obtain an intermediate abstract token, and then apply a sequence or splitting adaptation to relate it to the $N$ new actions. In cases where direct syntactic composition is not straightforward, the neighbour-based compatibility still fulfils its primary role: it narrows the space of plausible correspondences, guiding a structured and evidence-based manual rewriting of the specification.
\end{remark}

In conclusion, neighbour equivalence modulo $\Masking_V$ supports monitor and specification adaptation by guiding the runtime discovery of candidate evolutionary mappings.
It is most naturally applicable to direct translation scenarios, such as renaming, splitting, and selected cases of sequence expansion, while serving as a precise diagnostic guide for complex transformations, such as fusing and multi-mapping. By narrowing the search space of candidate correspondences, it helps transform an otherwise unguided manual update into a more systematic, evidence-based process.

\section{Conclusions and Future Work}
\label{sec:conclusion}

This paper has addressed the problem of runtime semantic discovery in evolving systems, namely the identification of monitoring-relevant correspondences between actions when the observable alphabet changes across system versions. This problem arises naturally in runtime verification when updates, reconfigurations, or architectural changes preserve the high-level intent of a system while altering the actions through which that intent is observed.

Our approach is based on the idea that the role of an action can often be approximated from the local context in which it occurs. Starting from this intuition, we introduced neighbour-based representations of traces, defined neighbour equivalence and its k-neighbour generalisation, and showed how these notions can be used to compare systems with equal or partially different alphabets through abstraction. The resulting framework is intentionally lightweight and runtime-oriented: rather than reconstructing a global behavioural model, it extracts local structural evidence from observed traces.

The paper also clarifies the limits of this approach. Neighbour equivalence is a useful structural invariant, but it is strictly weaker than trace equivalence. For this reason, it should not be interpreted as a complete characterisation of behavioural preservation. Instead, it provides a basis for deriving candidate semantic correspondences between actions, which can then support monitor and specification adaptation in selected evolution scenarios. In particular, the discussion in Section \ref{sec:spec-updat} suggests that the framework is most naturally applicable to cases close to renaming and controlled forms of splitting, while more complex transformations such as fusing generally require additional semantic information beyond local neighbour preservation.


Broadly speaking, the contributions of the paper span both theoretical and practical dimensions. On the theoretical side, it isolates a hierarchy of local equivalences that are computable on traces and related in a precise way to stronger behavioural notions such as trace equivalence. On the practical side, it provides a runtime-friendly criterion for narrowing the space of possible monitor and specification adaptations when the monitored system evolves and previously known action vocabularies become partially obsolete.
More broadly, we believe that local trace-based discovery of candidate semantic correspondences is a promising step toward maintaining monitoring continuity in software systems whose observable behaviour evolves over time.

Several directions remain open for future work. To enhance the discriminative power of our technique, a first extension involves enriching the neighbour-based abstraction with quantitative information, such as frequencies and probabilistic profiles. A second direction is to characterise more precisely the logical fragments for which specification adaptation rules can be proved sound. A third direction is an empirical evaluation of the framework, assessing both its practical cost and the accuracy and robustness of neighbour-based correspondences under noisy, partial, or ambiguous traces. As a first step in this direction, we have begun to implement the framework.

\bibliographystyle{ACM-Reference-Format}
\bibliography{references}

\end{document}